\newtheorem{thm}{Theorem}[section]
\newtheorem{cor}[thm]{Corollary}
\newtheorem{lem}[thm]{Lemma}
\newtheorem{prop}[thm]{Proposition}
\theoremstyle{definition}
\newtheorem{deff}[thm]{Definition}
\newcommand{\N}{\mathbb N}
\newcommand{\C}{\mathbb C}
\newcommand{\R}{\mathbb R}
\newcommand{\la}{\langle}
\newcommand{\ra}{\rangle}
\newcommand{\hi}{\mathcal H}
\begin{document}

\title[]{An elementary construction of the GKSL master equation for N-level systems}

\author[Ziemke]{Matthew Ziemke}
\address{Department of Mathematics\\
Drexel University \\ 
Philadelphia, PA}
\email{matthew.j.ziemke@drexel.edu}

\begin{abstract}
The GKSL master equation for N-level systems provides a necessary and sufficient form for the generator of a quantum dynamical semigroup in the Schrodinger picture where the underlying Hilbert space is $\C^N$. In this paper we provide a detailed, self-contained, and elementary construction of the GKSL master equation for an N-level system. We also provide necessary and sufficient conditions for forms of generators of semigroups which have some, but not all, of the defining properties of quantum dynamical semigroups. We do this in such a way to illuminate how each defining property of a quantum dynamical semigroup contributes to the form of the generators.\\

\end{abstract}

\maketitle

\begin{spacing}{1.3}
\section{Introduction}
Quantum dynamical semigroups (QDSs) are used to model irreversible open quantum systems.  In the Schrodinger picture, they are semigroups of operators $(T_t)_{t \geq}$ acting on $\mathcal{S}_1(\hi)$, the space of trace-class operators on the Hilbert space $\hi$, such that each $T_t$ preserves traces, is completely positive, and the map $t \mapsto T_t(A)$ is weakly continuous for all $A \in \mathcal{S}_1(\hi)$.  In 1976, Gorini, Kossakowski, and Sudarshan in \cite{gks} gave the form of the generator of a QDS of an N-level system.  Specifically, they showed that $L$ is the generator of a QDS of an N-level system if and only if there exist $N \times N$ complex matrices $H, G_1, \dots,G_{N^2-1}$, and positive scalars $\lambda_1, \dots ,\lambda_{N^2-1}$ such that
\begin{equation}
L(A)=-i[H,A]+\frac{1}{2}\sum_{p=1}^{N^2-1} \lambda_p ([G_p,AG_p^*]+[G_pA,G_p^*]), \quad \text{ for all } A \in M_N(\C)
\end{equation}
where $H$ is self-adjoint, $tr(H)=0$, $tr(G_p)=0$ for all $p=1, \dots , N^2-1$, and $tr(G_p^*G_q)=\delta_{pq}$ for all $p,q=1, \dots ,N^2-1$.  Around the same time, Lindblad gave a similar equation in \cite{lindblad} for the general form of generators of QDSs in the Heisenberg picture in the case when the semigroup is uniformly continuous.  Today, we refer to these equations describing the generators as GKSL master equations or Lindblad equations and they are used and studied quite extensively.

In this paper we provide a detailed, self-contained, and fairly elementary proof of the result given by Gorini, Kossakowski, and Sudarshan in \cite{gks}.  While the forward direction of our proof relies substantially on the ideas given in \cite{gks}, we do deviate at times for the sake of clarity and completeness.  In this direction we also break down the construction into steps in a way which allows us to see exactly how each of the defining properties of the QDS effect the form of the generator.  Not only does this help us to understand the interplay between the semigroup and its generator but also gives forms for generators of semigroups which have some, but not all, of the properties of a QDS.  For the other direction, the proof given in \cite{gks} relies on a result the authors cite as \cite{koss}.  Here, we are able to avoid using this result. Finally, when we say that our proof is "elementary" we mean that an undergraduate student who has taken a semester of advanced linear algebra should have the necessary background to understand the proof. Although, in the first section titled "Background," we do give the general definition for quantum dynamical semigroups which involves concepts and definitions that most readers would not expect an undergraduate student to be familiar with, they are not necessary for the remaining sections which include the proof of the GKSL master equation for N-level systems.

\section{Background}
Throughout the paper, $\hi$ denotes a separable complex Hilbert space, $\mathcal{B}(\hi)$ denotes the space of bounded linear operators on $\hi$, and $\mathcal{S}_1(\hi)$ denotes the space of trace-class operators on $\hi$.  All of our inner products are linear in the first argument.

Let $n \in \N$ and let $(e_k)_{k=1}^n$ be the standard basis for $\C^n$. Define $E_{ij}=e_i^* \otimes e_j$ where $e_i^* \otimes e_j: \C^n \rightarrow \C^n$ is given by $e_i^* \otimes e_j(x)=\la x,e_i \ra e_j$ for all $x \in \C^n$.  That is, $E_{ij}$ is the $n \times n$ matrix with a 1 in the $(i,j)$th coordinate and zeros everywhere else.  Also, for $k \in \N$, $M_k(\C)$ denotes the space of all $k \times k$ matrices with complex coefficients and the space $\mathcal{B}(\hi) \otimes M_k(\C)$ is the space of all $k \times k$ matrices whose entries are elements of $\mathcal{B}(\hi)$.

Let $\mathcal{A}$ be a $^*$-subalgebra of $\mathcal{B}(\hi)$, $T:\mathcal{A} \rightarrow \mathcal{A}$ be a bounded linear operator and let $k \in \N$.  Throughout the paper, we will denote by $T^{(k)}$ the operator $T^{(k)}:\mathcal{A}\otimes M_k(\C) \rightarrow \mathcal{A} \otimes M_k(\C)$ given by $T^{(k)}(A \otimes E_{ij})=T(A) \otimes E_{ij}$ for all $A \in \mathcal{A}$ and all $i,j=1, \dots ,k$.  We begin with some definitions.

\begin{deff}
Let $\mathcal{A}$ be a $^*$-subalgebra of $\mathcal{B}(\hi)$ and let $T:\mathcal{A}\rightarrow \mathcal{A}$ be a bounded linear operator.  We say the operator $T$ is {\bf completely positive} if and only if the operator $T^{(k)}$ is positive for all $k \in \N$.
\end{deff}

\begin{deff}
Let $\mathcal{A}$ be a $^*$-subalgebra of $\mathcal{B}(\hi)$.  A one-parameter family of bounded linear operators $(T_t)_{t \geq 0}$, where $T_t:\mathcal{A} \rightarrow \mathcal{A}$, is called a {\bf semigroup} on $\mathcal{A}$ if
\begin{enumerate}[(i)]
\item $T_{t+s}=T_tT_s$ for all $s,t \geq 0$, and

\item $T_0=1$ where $1$ is the identity operator.
\end{enumerate}
\end{deff}

It should be noted that when we mention weak continuity in the definition below, we do so in the Banach space sense.  We do not mean weak operator topology continuity.
\begin{deff}
A {\bf quantum dynamical semigroup (QDS) on $\mathcal{S}_1(\hi)$ } is a semigroup $(T_t)_{t \geq 0}$ on $\mathcal{S}_1(\hi)$  such that
\begin{enumerate}[(i)]
\item for all $A \in \mathcal{S}_1(\hi)$, the map $t \mapsto T_t(A)$ is weakly continuous,
\item for all $t \geq 0$, $T_t$ preserves traces, that is, $tr(T_t(A))=tr(A)$ for all $A \in \mathcal{S}_1(\hi)$, and
\item for all $t \geq 0$, $T_t$ is completely positive.
\end{enumerate}
\end{deff}
Note that by using the above definition for a QDS we are in the Schr{\"o}dinger picture.  For the Heisenberg picture, a QDS is a semigroup on $\mathcal{B}(\hi)$ with slightly different properties than the ones given above.  We now want to discuss the generator of a QDS.

\begin{deff}
Let $(T_t)_{t \geq 0}$ be a QDS on $\mathcal{S}_1(\hi)$.  Let $D(L) \subseteq \mathcal{S}_1(\hi)$ where $A \in D(L)$ provided
$$weak-\lim_{t \rightarrow 0} \frac{T_t(A)-A}{t}$$
exists and, for $A \in D(L)$, we define the {\bf generator} of the semigroup to be the operator $L:D(L) \rightarrow \mathcal{S}_1(\hi)$ which is given by
$$L(A)=weak-\lim_{t \rightarrow 0} \frac{T_t(A)-A}{t}.$$
\end{deff}

\section{The form of the Generator of a QDS of an N-level System}

In general, the generator of a QDS on $\mathcal{S}_1(\hi)$ is not necessarily bounded.  Here, however, we are interested in the generator of a QDS which describes an N-level system.  Mathematically, this means our Hilbert space $\hi$ is $\C^N$.  In this case, $\mathcal{S}_1(\hi)=\mathcal{B}(\hi)=M_N(\C)$ and further $\mathcal{B}(\mathcal{S}_1(\hi))=M_N(M_N(\C))=M_{N^2}(\C)$.  In this case we also have that the map $t \mapsto T_t$ is norm-continuous (that is, $\|T_t-T_{t_0} \| \rightarrow 0$ as $t \rightarrow t_0$) which implies that the generator $L$ is bounded,
$$L= \| \cdot \| - \lim_{t \rightarrow 0} \frac{T_t-1}{t}$$
and the semigroup is given by
$$T_t=e^{tL}=\sum_{n=0}^{\infty} \frac{(tL)^n}{n!}=\lim_{n\rightarrow \infty}\left(1-\frac{t}{n}L\right)^{-n} .$$
We now want to find the form of the generator of a QDS for an N-level system.  We start with a proposition.

\begin{prop}\label{proposition2.2}
Let $L$ be the generator of a QDS $(T_t)_{t \geq 0}$ on $M_N(\C)$.  Then, for all $A \in M_N(\C)$, we have that 
\begin{enumerate}[(i)]
\item $L(A^*)=L(A)^*$, and

\item $tr(L(A))=0$.
\end{enumerate}
\end{prop}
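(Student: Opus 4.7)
The plan is to derive both identities by exchanging the limit in the definition of $L$ with continuous linear operations on $M_N(\C)$. Because the ambient space is finite-dimensional, the weak limit defining $L$ coincides with the norm limit, and both the adjoint $A \mapsto A^*$ and the linear functional $tr$ are continuous. This reduces each part to a corresponding property of the individual maps $T_t$.

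For part (i), the key is to show that each $T_t$ is $*$-preserving, i.e., $T_t(A^*) = T_t(A)^*$. The complete positivity hypothesis, specialized to $k=1$, yields that $T_t$ is positive. Given a self-adjoint $A \in M_N(\C)$, the spectral theorem provides a decomposition $A = A_+ - A_-$ with $A_\pm \geq 0$, so $T_t(A) = T_t(A_+) - T_t(A_-)$ is a difference of positive (hence self-adjoint) matrices and is itself self-adjoint. For a general $A$, I would write $A = B + iC$ with $B = \tfrac{1}{2}(A+A^*)$ and $C = \tfrac{1}{2i}(A-A^*)$ self-adjoint, and compute
\[
T_t(A)^* \;=\; T_t(B)^* - iT_t(C)^* \;=\; T_t(B) - iT_t(C) \;=\; T_t(B - iC) \;=\; T_t(A^*).
\]
Applying this identity to the defining quotient for $L$ and invoking continuity of the adjoint then gives $L(A^*) = L(A)^*$.

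For part (ii), the trace preservation property gives $tr(T_t(A) - A) = tr(T_t(A)) - tr(A) = 0$ for every $t \geq 0$. Dividing by $t$, letting $t \to 0^+$, and using that $tr$ is a bounded linear functional that commutes with the limit yields
\[
tr(L(A)) \;=\; \lim_{t \to 0^+} \frac{tr(T_t(A)) - tr(A)}{t} \;=\; 0.
\]

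The only step requiring any real thought is the passage from complete positivity of $T_t$ to its $*$-preservation; the remainder amounts to exchanging a limit with a bounded linear map in a finite-dimensional setting, which should not pose any genuine obstacle.
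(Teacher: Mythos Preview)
Your proof is correct and follows essentially the same route as the paper: both parts pass the defining limit of $L$ through the continuous operations $(\cdot)^*$ and $tr$, using that positivity of $T_t$ forces $T_t(A^*)=T_t(A)^*$ and that trace preservation makes the difference quotient traceless. The only difference is that you spell out why a positive map is $*$-preserving (via the decomposition $A=B+iC$), whereas the paper simply asserts this implication.
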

\begin{proof}
Let $A \in M_N(\C)$.  For any $t \geq 0$, we have that $T_t(A^*)=T_t(A)^*$ since $T_t \geq 0$ and so
$$L(A^*)=\lim_{t \rightarrow 0} \frac{T_t(A^*)-A^*}{t} = \lim_{t \rightarrow 0} \left( \frac{T_t(A)-A}{t} \right)^* =L(A)^* .$$
Further, since $tr(T_t(A))=tr(A)$ for all $t \geq 0$, we have
$$tr(L(A))=\lim_{t \rightarrow 0} \frac{tr(T_t(A))-tr(A)}{t}=\lim_{t \rightarrow 0}\frac{tr(A)-tr(A)}{t}=0 .$$
This completes the proof.
\end{proof}

Note that $M_N(\C)=\mathcal{S}_2(\C^N)$, where $\mathcal{S}_2(\C^N)$ is the space of Hilbert-Schmidt operators on $\C^N$, and $\mathcal{S}_2(\C^N)$ is a Hilbert space. Hence, it makes sense to talk about an orthonormal basis for $M_N(\C)$ using the Hilbert-Schmidt inner product.

\begin{lem}\label{lemma2.3}
Let $(F_{\alpha})_{\alpha =1}^{N^2}$ be an orthonormal basis for $M_N(\C)$.  Then, for any $A \in M_N(\C)$, we have
$$\sum_{\alpha =1}^{N^2}F_{\alpha}AF_{\alpha}^*=tr(A) \cdot 1 .$$
\end{lem}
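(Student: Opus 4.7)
The plan is to exploit the fact that the quantity $\Phi(A) := \sum_{\alpha=1}^{N^2} F_\alpha A F_\alpha^*$ does not depend on the choice of orthonormal basis, and then to evaluate $\Phi(A)$ for a specific convenient basis, namely the matrix units $\{E_{ij}\}_{i,j=1}^N$, where the computation is direct.

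First, I would establish basis independence. Given another orthonormal basis $(G_\alpha)_{\alpha=1}^{N^2}$ of $M_N(\C)$, there is a unitary matrix $U = (U_{\alpha\beta}) \in M_{N^2}(\C)$ such that $G_\alpha = \sum_\beta U_{\alpha\beta} F_\beta$ for each $\alpha$. Substituting this into $\sum_\alpha G_\alpha A G_\alpha^*$ and using $\sum_\alpha U_{\alpha\beta} \overline{U_{\alpha\gamma}} = \delta_{\beta\gamma}$ yields
$$\sum_{\alpha=1}^{N^2} G_\alpha A G_\alpha^* = \sum_{\beta,\gamma} \left(\sum_\alpha U_{\alpha\beta}\overline{U_{\alpha\gamma}}\right) F_\beta A F_\gamma^* = \sum_{\beta=1}^{N^2} F_\beta A F_\beta^*.$$
Thus $\Phi$ is well-defined independent of basis, so it suffices to prove the identity for one particular ONB.

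Second, I would verify that the matrix units $\{E_{ij}\}_{i,j=1}^N$ form an orthonormal basis: since $E_{ij}E_{lk} = \delta_{jl} E_{ik}$, we get $\langle E_{ij}, E_{kl}\rangle = \mathrm{tr}(E_{ij} E_{kl}^*) = \mathrm{tr}(E_{ij}E_{lk}) = \delta_{ik}\delta_{jl}$, and there are exactly $N^2$ of them. A direct matrix computation then shows that $E_{ij} A E_{ij}^* = E_{ij} A E_{ji} = A_{jj} E_{ii}$, so
$$\sum_{i,j=1}^N E_{ij} A E_{ij}^* = \sum_{i=1}^N \left(\sum_{j=1}^N A_{jj}\right) E_{ii} = \mathrm{tr}(A) \sum_{i=1}^N E_{ii} = \mathrm{tr}(A) \cdot 1.$$
Combined with basis independence, this gives the claim.

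The main obstacle is essentially bookkeeping: one must carefully track the conjugation in Step 1 (recalling that the inner product is linear in the first argument, so it is the second factor that carries the complex conjugate when expanding $G_\alpha^*$), and one must be careful about the direction of indices in the matrix-unit product $E_{ij} A E_{ji}$ in Step 2. Neither step requires anything beyond elementary linear algebra.
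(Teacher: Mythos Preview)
Your proof is correct and follows essentially the same two-step strategy as the paper: first establish that $\sum_\alpha F_\alpha A F_\alpha^*$ is independent of the orthonormal basis (you phrase this via a unitary change-of-basis matrix, the paper via Fourier expansion, but these are the same computation), then evaluate the sum explicitly on the matrix units. The only cosmetic discrepancy is that the paper's convention $E_{ij}=e_i^*\otimes e_j$ yields $E_{ij}AE_{ji}=A_{ii}E_{jj}$ rather than your $A_{jj}E_{ii}$, but the double sum is of course the same either way.
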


\begin{proof}
First, we would like to prove that if $(E_{\beta})_{\beta =1}^{N^2}$ is another orthonormal basis of $M_N(\C)$ then 
\begin{equation}\label{equation7}
\sum_{\alpha =1}^{N^2}F_{\alpha}AF_{\alpha}^*=\sum_{\beta =1}^{N^2}E_{\beta}AE_{\beta}^*.
\end{equation}
Since $(E_{\beta})_{\beta =1}^{N^2}$ is an orthonormal basis for $M_N(\C)$, we have that $F_{\alpha}=\sum_{\beta =1}^{N^2} \la F_{\alpha},E_{\beta} \ra E_{\beta}$ for all $\alpha =1, \dots , N^2$.  Hence,
\begin{align*}
\sum_{\alpha =1}^{N^2}F_{\alpha}AF_{\alpha}^* & = \sum_{\alpha =1}^{N^2}\left(\sum_{\beta =1}^{N^2} \la F_{\alpha},E_{\beta} \ra E_{\beta} \right)A \left( \sum_{\gamma =1}^{N^2} \overline{ \la F_{\alpha}, E_{\gamma} \ra }E_{\gamma}^* \right)
\\ & = \sum_{\beta =1}^{N^2}\sum_{\gamma =1}^{N^2} \left\la \sum_{\alpha =1}^{N^2}\la E_{\gamma}, F_{\alpha} \ra F_{\alpha}, E_{\beta} \right\ra E_{\beta}AE_{\gamma}^*
\\ & = \sum_{\beta =1}^{N^2}\sum_{\gamma =1}^{N^2} \la E_{\gamma},E_{\beta} \ra E_{\beta}AE_{\gamma}^*
\\ & = \sum_{\beta =1}^{N^2}E_{\beta}AE_{\beta}^*
\end{align*}
which proves \eqref{equation7}.

  Let $A \in M_N(\C)$ and let $(A_{ij})_{i,j=1}^N \subseteq \C$ so that $A=[A_{ij}]_{i,j=1}^N$.  Then, for any $x \in \C^N$, we have
$$E_{ij}AE_{ji}x=\la x,e_j \ra \la Ae_i,e_i \ra e_j = A_{ii}(e_j^* \otimes e_j)x=A_{ii}E_{jj}x$$
and so $E_{ij}AE_{ji}=A_{ii}E_{jj}$.  Then,
$$\sum_{i,j=1}^NE_{ij}AE_{ij}^*=\sum_{i,j=1}^NE_{ij}AE_{ji}=\sum_{i,j=1}^NA_{ii}E_{jj}=\left(\sum_{j=1}^NE_{jj} \right)\left(\sum_{i=1}^NA_{ii} \right)=tr(A)\cdot 1$$
which completes the proof.
\end{proof}
We now proceed with another lemma.

\begin{lem}\label{lemma2.4}
Let $(F_{\alpha})_{\alpha =1}^{N^2}$ be an orthonormal basis for $M_N(\C)$.  Then, for any $L \in M_N(M_N(\C))$, there exists $(c_{\alpha , \beta})_{\alpha , \beta =1}^{N^2} \subseteq \C$ so that
$$L(A)=\sum_{\alpha , \beta =1}^{N^2}c_{\alpha , \beta}F_{\alpha}AF_{\beta}^*.$$
Further, if $L(A^*)=L(A)^*$, then $\overline{c_{\alpha , \beta}}=c_{\beta , \alpha}$ for all $\alpha , \beta =1, \dots ,N^2$.
\end{lem}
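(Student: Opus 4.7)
The space of linear operators on $M_N(\C)$, i.e.\ $M_N(M_N(\C))$, has complex dimension $N^4$, and the collection $\{A \mapsto F_\alpha A F_\beta^* : \alpha,\beta = 1,\dots,N^2\}$ contains exactly $N^4$ such operators. The plan is to show this collection spans; by the dimension count it is then a basis of $M_N(M_N(\C))$, which gives both existence and uniqueness of the scalars $c_{\alpha,\beta}$, and the Hermiticity claim will follow by expanding $L(A)^*$ in two different ways and invoking the uniqueness.

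First I would treat the special case where $(F_\alpha)$ is the standard basis $(E_{ij})_{i,j=1}^N$. A direct computation from the definition $E_{ij}(x) = \la x,e_i\ra e_j$ shows that for each tuple $(i,j,k,l)$ the operator $\Phi_{i,j,k,l}: A \mapsto E_{ij} A E_{kl}^*$ sends exactly one standard-basis element of $M_N(\C)$ (namely $E_{ki}$) to a specific standard-basis element (namely $E_{lj}$), and annihilates every other $E_{mn}$. As $(i,j,k,l)$ ranges over $\{1,\dots,N\}^4$, the pairs $(E_{ki},E_{lj})$ range over every ordered pair of standard-basis elements of $M_N(\C)$, so the $\Phi_{i,j,k,l}$ are precisely the matrix units of $M_N(M_N(\C))$ relative to $\{E_{mn}\}$. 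Being $N^4$ linearly independent elements of an $N^4$-dimensional space, they form a basis, so for any $L$ there exist unique scalars $d_{(i,j),(k,l)}$ with
\[
L(A) = \sum_{i,j,k,l=1}^N d_{(i,j),(k,l)}\, E_{ij}\, A\, E_{kl}^*.
\]

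Next I would transfer this to the given orthonormal basis $(F_\alpha)_{\alpha=1}^{N^2}$ by expanding $E_{ij} = \sum_\alpha \la E_{ij},F_\alpha\ra F_\alpha$ and $E_{kl}^* = \sum_\beta \overline{\la E_{kl},F_\beta\ra}\, F_\beta^*$, substituting into the previous display, and regrouping to obtain
\[
L(A) = \sum_{\alpha,\beta=1}^{N^2} c_{\alpha,\beta}\, F_\alpha\, A\, F_\beta^*
\]
with $c_{\alpha,\beta} = \sum_{i,j,k,l} d_{(i,j),(k,l)}\, \la E_{ij},F_\alpha\ra\, \overline{\la E_{kl},F_\beta\ra}$. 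This exhibits every element of $M_N(M_N(\C))$ as a linear combination of the $N^4$ operators $F_\alpha(\cdot)F_\beta^*$, so they span; by dimension they are a basis, and the $c_{\alpha,\beta}$ are uniquely determined by $L$.

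Finally, assuming $L(A^*) = L(A)^*$ for all $A$, I would compute
\[
L(A)^* = \sum_{\alpha,\beta} \overline{c_{\alpha,\beta}}\, F_\beta\, A^*\, F_\alpha^* = \sum_{\alpha,\beta} \overline{c_{\beta,\alpha}}\, F_\alpha\, A^*\, F_\beta^*,
\]
where the second equality swaps the dummy indices $\alpha\leftrightarrow\beta$. Comparing with the expansion $L(A^*) = \sum_{\alpha,\beta} c_{\alpha,\beta}\, F_\alpha\, A^*\, F_\beta^*$ (valid since the first part applies to the element $A^* \in M_N(\C)$) and using uniqueness of coefficients forces $c_{\alpha,\beta} = \overline{c_{\beta,\alpha}}$ for all $\alpha,\beta$. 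The main obstacle is establishing the spanning claim in the standard-basis case; once the matrix-unit description is verified there, the change of basis and the Hermiticity computation are routine bookkeeping.
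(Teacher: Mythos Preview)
Your proof is correct, but it follows a different route than the paper's. The paper defines $\Gamma_{\alpha,\beta}(A)=F_\alpha A F_\beta^*$ and shows directly that the family $(\Gamma_{\alpha,\beta})$ is \emph{orthonormal} in $M_N(M_N(\C))$ with respect to the Hilbert--Schmidt inner product, by computing $\langle \Gamma_{\alpha,\beta},\Gamma_{u,v}\rangle$ and invoking Lemma~\ref{lemma2.3} (the identity $\sum_\alpha F_\alpha A F_\alpha^* = \operatorname{tr}(A)\cdot 1$) to collapse the inner sum. Dimension then makes it a basis, and the Hermiticity claim follows from uniqueness exactly as in your last paragraph. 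Your approach instead establishes the spanning property in the standard basis $(E_{ij})$ by identifying the maps $A\mapsto E_{ij}AE_{kl}^*$ with the matrix units of $M_N(M_N(\C))$, and then transports the result to $(F_\alpha)$ by an explicit change of basis. The paper's route is shorter and yields the stronger conclusion of orthonormality (not merely linear independence), but it relies on Lemma~\ref{lemma2.3} and the Hilbert--Schmidt inner product on the superoperator space; your route is more self-contained and elementary, needing only the matrix-unit description and routine bookkeeping, at the cost of a two-step argument.
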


\begin{proof}
Let $(F_{\alpha})_{\alpha =1}^{N^2}$ be an orthonormal basis for $M_N(\C)$.   For all $\alpha , \beta =1, \dots , N^2$, define $\Gamma_{\alpha , \beta}:M_N(M_N(\C)) \rightarrow M_N(M_N(\C))$ by $\Gamma_{\alpha , \beta}(A)=F_{\alpha}AF_{\beta}^*$ for all $A \in M_N(\C)$.

{\bf Claim:}  The matrices $(\Gamma_{\alpha , \beta})_{\alpha , \beta =1}^{N^2}$ form an orthonormal basis for $M_N(M_N(\C))$.

Indeed, let $\alpha , \beta , u,v \in \{1,2, \dots ,N^2\}$.  Then, 
\begin{align*}
 \la \Gamma_{\alpha , \beta} , \Gamma_{u,v} \ra & =\sum_{k=1}^{N^2} \la \Gamma_{\alpha , \beta}F_k, \Gamma_{u,v}F_k \ra 
\\ &=    \sum_{k=1}^{N^2}tr(F_{\alpha}F_kF_{\beta}^*F_vF_k^*F_u^*)
\\ &=tr\left(F_{\alpha}\left(\sum_{k=1}^{N^2}F_kF_{\beta}^*F_vF_k^*\right)F_u^*\right)
\end{align*}
and, by Lemma \ref{lemma2.3}, we have that 
\begin{align*}
tr\left(F_{\alpha}\left(\sum_{k=1}^{N^2}F_kF_{\beta}^*F_vF_k^*\right)F_u^*\right) & = tr(F_{\alpha}(tr(F_{\beta}^*F_v)\cdot 1)F_u^*)
\\ & = \la F_v,F_{\beta} \ra \la F_{\alpha}, F_u \ra
\\ & =\left\{ \begin{array}{cc}
0 & \text{ if } v \neq \beta \text{ or } \alpha \neq u \\
1 & \text{ if } v=\beta \text{ and } \alpha =u \\
\end{array} \right. .
\end{align*}
So $( \Gamma_{\alpha , \beta})_{\alpha , \beta =1}^{N^2}$ forms an orthonormal set for $M_N(M_N(\C))$ and since $M_N(M_N(\C))$ has dimension $N^4$, we have that $( \Gamma_{\alpha , \beta})_{\alpha , \beta =1}^{N^2}$ is an orthonormal basis for $M_N(M_N(\C))$.  This completes the proof of the claim.

Since $( \Gamma_{\alpha , \beta})_{\alpha , \beta =1}^{N^2}$ is an orthonormal basis for $M_N(M_N(\C))$ and $L \in M_N(M_N(\C))$, there exists $(c_{\alpha , \beta})_{\alpha , \beta =1}^{N^2} \subseteq \C$ so that
$$L(A)= \sum_{\alpha , \beta =1}^{N^2}c_{\alpha ,\beta}F_{\alpha}AF_{\beta}^*, \quad \text{ for all } A \in M_N(\C).$$
Further, if $L(A^*)=L(A)^*$ then 
$$ \sum_{\alpha , \beta =1}^{N^2}c_{\alpha ,\beta}F_{\alpha}A^*F_{\beta}^*= \sum_{\alpha , \beta =1}^{N^2}\overline{c_{\alpha ,\beta}}F_{\beta}AF_{\alpha}^*=\sum_{\alpha , \beta =1}^{N^2}\overline{c_{\beta , \alpha}}F_{\alpha}A^*F_{\beta}^*$$
where we get the last equality by switching the indices.  Then we have that
$$\sum_{\alpha , \beta =1}^{N^2}(c_{\alpha , \beta } - \overline{c_{\beta , \alpha}}) \Gamma_{\alpha, \beta}(A)=0, \quad \text{ for all } A \in M_N(\C)$$
and since $(\Gamma_{\alpha , \beta})_{\alpha , \beta =1}^{N^2}$ is an orthonormal basis for $M_N(M_N( \C))$, we have that $c_{\alpha , \beta}=\overline{c_{\beta , \alpha}}$ for all $\alpha , \beta =1, \dots ,N^2$.  This completes the proof.
\end{proof}

The last lemma gives a form for any $T \in M_N(M_N(\C))$.  The next gives a form for any $T \in M_N(M_N(\C))$ such that $T(A^*)=T(A)^*$ for all $A \in M_N(\C)$.  For example, $T$ has this property if it is positive or, as Proposition \ref{proposition2.2} shows, if it is the generator of a QDS for an N-level system.

\begin{prop}\label{proposition2.5}
Let $L \in M_N(M_N(\C))$ such that $L(A^*)=L(A)^*$ for all $A \in M_N(\C)$.  Then there exists $K \in M_N(\C)$, $(G_p)_{p=1}^{N^2-1}\subseteq M_N(\C)$, and $(\lambda_p)_{p=1}^{N^2} \subseteq \R$ such that 
$$L(A)=AK^*+KA+\sum_{p=1}^{N^2-1}\lambda_pG_pAG_p^*, \quad \text{ for all } A \in M_N(\C)$$
where $tr(K)=\frac{1}{2}\lambda_{N^2}$, $tr(G_p)=0$ for all $p=1,\dots,N^2-1$, and $tr(G_p^*G_q)=\delta_{p,q}$ for all $p,q=1, \dots, N^2-1$.
\end{prop}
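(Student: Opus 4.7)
My plan is to apply Lemma \ref{lemma2.4} with a carefully chosen orthonormal basis and then extract a diagonalizable Hermitian block. I begin by choosing an orthonormal basis $(F_\alpha)_{\alpha=1}^{N^2}$ of $M_N(\C)$ with $F_{N^2}=\tfrac{1}{\sqrt{N}}I$; orthogonality to $F_{N^2}$ then forces $tr(F_\alpha)=0$ for all $\alpha<N^2$, so the first $N^2-1$ basis elements are traceless. Applying Lemma \ref{lemma2.4} gives scalars $c_{\alpha,\beta}$ with $c_{\alpha,\beta}=\overline{c_{\beta,\alpha}}$, so the matrix $C=(c_{\alpha,\beta})$ is Hermitian; in particular $c_{N^2,N^2}\in\R$.

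Next I separate out the terms involving $F_{N^2}$. Using $F_{N^2}AF_{N^2}^*=\tfrac{1}{N}A$, the diagonal $(N^2,N^2)$ term becomes $\tfrac{c_{N^2,N^2}}{N}A$, while the cross terms become $\tfrac{1}{\sqrt{N}}\sum_{\alpha<N^2}c_{\alpha,N^2}F_\alpha A$ and $\tfrac{1}{\sqrt{N}}\sum_{\beta<N^2}c_{N^2,\beta}AF_\beta^*$. Defining
$$K=\frac{c_{N^2,N^2}}{2N}I+\frac{1}{\sqrt{N}}\sum_{\alpha=1}^{N^2-1}c_{\alpha,N^2}F_\alpha,$$
the identity $\overline{c_{\alpha,N^2}}=c_{N^2,\alpha}$ lets me verify that $KA+AK^*$ reproduces these three contributions exactly. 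What remains is the double sum $\sum_{\alpha,\beta=1}^{N^2-1}c_{\alpha,\beta}F_\alpha AF_\beta^*$ indexed by traceless basis elements, with the restricted matrix $C'=(c_{\alpha,\beta})_{\alpha,\beta=1}^{N^2-1}$ still Hermitian.

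To handle this remaining sum, I spectrally decompose $C'=UDU^*$ with $D=\mathrm{diag}(\lambda_1,\dots,\lambda_{N^2-1})$ of real eigenvalues and $U=(u_{\alpha,p})$ unitary, and set $G_p=\sum_{\alpha=1}^{N^2-1}u_{\alpha,p}F_\alpha$. Substituting $c_{\alpha,\beta}=\sum_p u_{\alpha,p}\lambda_p\overline{u_{\beta,p}}$ and reindexing rewrites the double sum as $\sum_p \lambda_p G_pAG_p^*$. Since each $G_p$ is a linear combination of traceless basis elements, $tr(G_p)=0$; unitarity of $U$ together with orthonormality of the $F_\alpha$ gives $tr(G_p^*G_q)=\delta_{p,q}$. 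Finally, $tr(F_\alpha)=0$ for $\alpha<N^2$ forces $tr(K)=\tfrac{c_{N^2,N^2}}{2}$, which is real, so I set $\lambda_{N^2}:=c_{N^2,N^2}$ and obtain $tr(K)=\tfrac{1}{2}\lambda_{N^2}$.

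The main obstacle is not any single difficult step but rather the bookkeeping in choosing $K$ so that the three cross/diagonal contributions from $F_{N^2}$ collapse into $KA+AK^*$. The Hermiticity $c_{\alpha,N^2}=\overline{c_{N^2,\alpha}}$ is essential here: without the hypothesis $L(A^*)=L(A)^*$, the two cross sums would not be adjoints of one another and no such $K$ could exist. The spectral decomposition of $C'$ is standard linear algebra and poses no difficulty.
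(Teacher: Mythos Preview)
Your proof is correct and follows essentially the same approach as the paper: choose an orthonormal basis with $F_{N^2}=\tfrac{1}{\sqrt{N}}I$, separate off the terms involving $F_{N^2}$ into $KA+AK^*$, and diagonalize the remaining Hermitian $(N^2-1)\times(N^2-1)$ block to produce the $G_p$. The only cosmetic difference is that you write the spectral decomposition as $C'=UDU^*$ with $G_p=\sum_\alpha u_{\alpha,p}F_\alpha$, whereas the paper uses $C'=U^*DU$ with $G_p=\sum_\alpha \overline{u_{p,\alpha}}F_\alpha$; these are equivalent conventions.
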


\begin{proof}
Let $(F_{\alpha})_{\alpha =1}^{N^2}$ be an orthonormal basis for $M_N(\C)$ where $F_{N^2}=\frac{1}{\sqrt{N}}1$.  By Lemma \ref{lemma2.4}, there exists $(c_{\alpha , \beta})_{\alpha , \beta =1}^{N^2} \subseteq \C$ so that
$$L(A)=\sum_{\alpha , \beta =1}^{N^2} c_{\alpha , \beta}F_{\alpha}AF_{\beta}^*, \quad \text{ for all } A \in M_N(\C)$$
where $c_{\alpha , \beta}=\overline{c_{\beta , \alpha}}$ for all $\alpha , \beta =1, \dots ,N^2$.  Then we have that
\begin{align*}
L(A) & = \sum_{\alpha , \beta =1}^{N^2} c_{\alpha , \beta}F_{\alpha}AF_{\beta}^*
\\ & =\sum_{\alpha , \beta =1}^{N^2-1} c_{\alpha , \beta}F_{\alpha}AF_{\beta}^* + \frac{1}{\sqrt{N}}\sum_{\alpha =1}^{N^2-1}c_{\alpha , N^2}F_{\alpha}A+\frac{1}{\sqrt{N}}\sum_{\beta =1}^{N^2-1}c_{N^2,\beta}AF_{\beta}^*+\frac{1}{N}c_{N^2,N^2}A
\\ & = \sum_{\alpha , \beta =1}^{N^2-1} c_{\alpha , \beta}F_{\alpha}AF_{\beta}^*+\left(\frac{1}{\sqrt{N}}\sum_{\alpha =1}^{N^2-1}c_{\alpha , N^2}F_{\alpha}\right)A+A \left(\frac{1}{\sqrt{N}}\sum_{\alpha =1}^{N^2-1}c_{\alpha , N^2}F_{\alpha} \right)^*+\frac{1}{N}c_{N^2,N^2}A
\\ & =  \sum_{\alpha , \beta =1}^{N^2-1} c_{\alpha , \beta}F_{\alpha}AF_{\beta}^* +KA+AK^* \stepcounter{equation}\tag{\theequation}\label{equation4}
\end{align*}
where
$$K=\frac{1}{\sqrt{N}}\sum_{\alpha =1}^{N^2-1}c_{\alpha , N^2}F_{\alpha}+ \frac{1}{2N}c_{N^2,N^2}1 .$$
Since $c_{\alpha , \beta}=\overline{c_{\beta , \alpha}}$ for all $\alpha , \beta =1, \dots , N^2$, the matrix $[c_{\alpha , \beta}]_{\alpha , \beta =1}^{N^2-1}$ is self-adjoint so, by the Spectral Theorem for self-adjoint matrices, there exists a unitary matrix $U=[u_{\alpha , \beta}]_{\alpha , \beta =1}^{N^2-1}$ and a diagonal matrix $D=[\lambda_{\alpha , \beta}]_{\alpha , \beta =1}^{N^2-1}$, that is $\lambda_{\alpha , \beta}=0$ for all $\alpha \neq \beta$, where $\lambda_{\alpha}=\lambda_{\alpha , \alpha} \in \R$, such that $[c_{\alpha , \beta}]_{\alpha , \beta =1}^{N^2-1}=U^*DU$.  Simple matrix multiplication will then give that
$$c_{\alpha , \beta}=\sum_{p=1}^{N^2-1}\lambda_p\overline{u_{p,\alpha}}u_{p, \beta}, \quad \text{ for all } \alpha , \beta =1, \dots, N^2-1.$$
Then, from Equation \eqref{equation4}, we have that, for any $A \in M_N(\C)$,
\begin{align*}
L(A) &= \sum_{\alpha , \beta =1}^{N^2-1}c_{\alpha , \beta}F_{\alpha}AF_{\beta}^*+KA+AK^*
\\ & =  \sum_{\alpha , \beta =1}^{N^2-1}\sum_{p=1}^{N^2-1}\lambda_p\overline{u_{p,\alpha}}u_{p, \beta}F_{\alpha}AF_{\beta}^*+KA+AK^*
\\ & = \sum_{p=1}^{N^2-1}\lambda_p\left(\sum_{\alpha =1}^{N^2-1} \overline{u_{p,\alpha}}F_{\alpha} \right)A \left( \sum_{\beta =1}^{N^2-1}\overline{u_{p,\beta}}F_{\beta} \right)^* +KA+AK^*
\\ & = \sum_{p=1}^{N^2-1}\lambda_pG_pAG_p^*+KA+AK^*
\end{align*}
where
$$G_p=\sum_{\alpha =1}^{N^2-1}\overline{u_{p, \alpha}}F_{\alpha}, \quad \text{ for all } p=1, \dots , N^2-1 .$$
Since $(F_{\alpha})_{\alpha =1}^{N^2-1} \cup \{ \frac{1}{\sqrt{N}}1 \}$ is orthonormal, we have that $tr(F_{\alpha})=0$ for all $\alpha =1, \dots , N^2-1$ and so it is straightforward to see that $tr(K)=\frac{1}{2}\lambda_{N^2}$, where we define $\lambda_{N^2}=c_{N^2,N^2}$, and $tr(G_p)=0$ for all $p=1, \dots ,N^2-1$.  To see that $tr(G_p^*G_q)=\delta_{p,q}$ for all $p,q=1, \dots , N^2-1$, we calculate the trace using the equation for $G_p$ given above to obtain
$$tr(G_p^*G_q)=\sum_{\alpha , \beta =1}^{N^2-1}u_{p, \alpha}\overline{u_{q, \beta}}tr(F_{\alpha}^*F_{\beta})=\sum_{\alpha =1}^{N^2-1}u_{p, \alpha}\overline{u_{q, \alpha}}$$
and $\sum_{\alpha =1}^{N^2-1}u_{p, \alpha}\overline{u_{q, \alpha}}= \delta_{p,q}$ since it is the inner product of the $p$th column of $U^*$ with the $q$th column of $U^*$ and $U^*$ is unitary.  This completes the proof.
\end{proof}

When we say a semigroup $(T_t)_{t \geq 0}$ is {\bf $^*$-preserving} we mean that $T_t(A^*)=T_t(A)^*$ for all $A \in M_N(\C)$ and all $t \geq 0$.  The form given in the last proposition actually characterizes the generators of all uniformly continuous, $^*$-preserving semigroups on $M_N(\C)$ as the next corollary shows.

\begin{cor}
Let $(T_t)_{t \geq 0}$ be a uniformly continuous semigroup on $M_N(\C)$ and let $L$ be its generator.  Then $(T_t)_{t \geq 0}$ is $^*$-preserving if and only if there exists $K \in M_N(\C)$, $(G_p)_{p=1}^{N^2-1}\subseteq M_N(\C)$, and $(\lambda_p)_{p=1}^{N^2} \subseteq \R$ such that 
$$L(A)=AK^*+KA+\sum_{p=1}^{N^2-1}\lambda_pG_pAG_p^*, \quad \text{ for all } A \in M_N(\C)$$
where $tr(K)=\frac{1}{2}\lambda_{N^2}$, $tr(G_p)=0$ for all $p=1,\dots,N^2-1$, and $tr(G_p^*G_q)=\delta_{p,q}$ for all $p,q=1, \dots, N^2-1$.
\end{cor}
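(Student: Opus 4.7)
The plan is to prove the corollary by handling the two directions separately, with the forward direction being essentially a citation of Proposition \ref{proposition2.5} and the backward direction a short computation combined with the exponential series formula for $T_t$.

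For the forward direction, suppose $(T_t)_{t \geq 0}$ is $^*$-preserving. Then for every $A \in M_N(\C)$, the same argument used in Proposition \ref{proposition2.2}(i) applies: since $T_t(A^*) = T_t(A)^*$ for all $t \geq 0$, passing to the limit in the difference quotient (which exists in norm here because the semigroup is uniformly continuous) gives $L(A^*) = L(A)^*$. Because uniform continuity of $(T_t)_{t \geq 0}$ forces $L$ to be bounded, we have $L \in M_N(M_N(\C))$, and now the hypotheses of Proposition \ref{proposition2.5} are satisfied verbatim. Invoking that proposition yields the desired $K$, $G_p$, and $\lambda_p$ with the stated trace and orthonormality conditions.

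For the backward direction, suppose $L$ has the stated form. I first verify $L(A^*) = L(A)^*$ directly: expanding
\[
L(A^*) = A^*K^* + KA^* + \sum_{p=1}^{N^2-1} \lambda_p G_p A^* G_p^*
\]
and comparing with
\[
L(A)^* = \left(AK^* + KA + \sum_{p=1}^{N^2-1} \lambda_p G_p A G_p^*\right)^* = KA^* + A^*K^* + \sum_{p=1}^{N^2-1} \overline{\lambda_p}\, G_p A^* G_p^*,
\]
these agree since each $\lambda_p \in \R$. By induction this gives $L^n(A^*) = L^n(A)^*$ for every $n \geq 0$, and using the exponential series formula $T_t = e^{tL} = \sum_{n=0}^\infty (tL)^n/n!$ noted in the excerpt, together with continuity of the adjoint operation, we get
\[
T_t(A^*) = \sum_{n=0}^\infty \frac{t^n}{n!} L^n(A^*) = \sum_{n=0}^\infty \frac{t^n}{n!} L^n(A)^* = T_t(A)^*
\]
for all $t \geq 0$, so $(T_t)_{t \geq 0}$ is $^*$-preserving.

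I do not anticipate a substantive obstacle here: the corollary is essentially a repackaging of Proposition \ref{proposition2.5} at the semigroup level. The only nontrivial input beyond that proposition is the exponential representation of $T_t$ from a bounded generator, which is already stated in the excerpt and is what allows the pointwise identity $L(A^*) = L(A)^*$ to lift to the semigroup level.
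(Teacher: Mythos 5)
Your proposal is correct and follows essentially the same route as the paper: the forward direction transfers the $^*$-preserving property from the semigroup to $L$ via the difference quotient (as in Proposition \ref{proposition2.2}) and then invokes Proposition \ref{proposition2.5}, while the backward direction checks $L(A^*)=L(A)^*$ from the given form and lifts it to $T_t=e^{tL}$ term by term in the exponential series. The only difference is that you write out explicitly the verification the paper labels as ``straightforward.''
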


\begin{proof}
For the forward direction, we can see from the proof of Proposition \ref{proposition2.2} that since $(T_t)_{t \geq 0}$ preserves stars we have that $L$ preserves stars and so we can apply Proposition \ref{proposition2.5}.

For the backwards direction, it is straightforward to see from the form of $L$ that is given we have that $L(A^*)=L(A)^*$ for all $A \in M_N(\C)$ and so
\begin{equation}\label{equationcor}
T_t(A^*)=e^{tL}(A^*)=\sum_{k=0}^{\infty}\frac{t^k}{k!}L^k(A^*)=\sum_{k=0}^{\infty}\frac{t^k}{k!}L^k(A)^*=\left(\sum_{k=0}^{\infty}\frac{t^k}{k!}L^k(A)\right)^*=e^{tL}(A)^*=T_t(A)^*.
\end{equation}
This completes the proof.
\end{proof}

As we previously mentioned, if $L$ is the generator of a quantum dynamical semigroup then we know, from Proposition \ref{proposition2.2}, that $L(A^*)=L(A)^*$ and so we are able to write $L$ in the form given in Proposition \ref{proposition2.5}.  Of course, the fact that $L(A^*)=L(A)^*$ does not characterize the generator of a QDS so we would like to incorporate other properties of generators.  The next proposition further assumes the semigroup generated by $L$ preserves traces, that is, $tr(T_t(A))=tr(A)$ for all $A \in M_N(\C)$ and all $t \geq 0$, which implies $tr(L(A))=0$ for all $A \in M_N(\C)$.

\begin{prop}\label{proposition2.6}
Let $L \in M_N(M_N(\C))$ such that $L(A^*)=L(A)^*$ for all $A \in M_N(\C)$ and $tr(L(A))=0$ for all $A \in M_N(\C)$.  Then there exist $H \in M_N(\C)$, $(G_p)_{p=1}^{N^2-1} \subseteq M_N(\C)$, and $(\lambda_p)_{p=1}^{N^2-1} \subseteq \R$ such that
$$L(A)=-i[H,A]+\frac{1}{2} \sum_{p=1}^{N^2-1}\lambda_p([G_p,AG_p^*]+[G_pA,G_p^*]), \quad \text{ for all } A \in M_N(\C)$$
where $H$ is self-adjoint, $tr(H)=0$, $tr(G_p)=0$ for all $p=1, \dots, N^2-1$, and $tr(G_p^*G_q)=\delta_{p,q}$ for all $p,q=1, \dots , N^2-1$.
\end{prop}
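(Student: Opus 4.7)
The plan is to apply Proposition \ref{proposition2.5} and then exploit the new hypothesis $tr(L(A))=0$ to extract a self-adjoint, traceless Hamiltonian from the coefficient $K$. By Proposition \ref{proposition2.5}, we may write $L(A)=AK^*+KA+\sum_{p=1}^{N^2-1}\lambda_p G_pAG_p^*$ with $tr(K)=\frac{1}{2}\lambda_{N^2}$ and the stated trace/orthonormality conditions on the $G_p$. Taking the trace of this formula, using cyclicity, and invoking $tr(L(A))=0$ for every $A\in M_N(\C)$ gives $tr\bigl((K+K^*+\sum_p\lambda_p G_p^*G_p)A\bigr)=0$ for all $A$, and hence
$$K+K^*=-\sum_{p=1}^{N^2-1}\lambda_pG_p^*G_p.$$
This compatibility relation is the heart of the argument.

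Next, set $M:=\frac{1}{2}\sum_{p=1}^{N^2-1}\lambda_pG_p^*G_p$, which is self-adjoint since the $\lambda_p$ are real, and define $H:=i(K+M)$. The identity $K+K^*=-2M$ then yields $H^*=-i(K^*+M)=-i(-K-2M+M)=i(K+M)=H$, so $H$ is self-adjoint. Substituting $K=-iH-M$ and $K^*=iH-M$ into $KA+AK^*$ produces
$$KA+AK^*=-i[H,A]-\tfrac{1}{2}\sum_{p=1}^{N^2-1}\lambda_p(G_p^*G_pA+AG_p^*G_p),$$
and combining this with the surviving term $\sum_p\lambda_pG_pAG_p^*$ together with the elementary identity $[G_p,AG_p^*]+[G_pA,G_p^*]=2G_pAG_p^*-AG_p^*G_p-G_p^*G_pA$ yields the claimed form of $L$.

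It remains to verify $tr(H)=0$. Taking the trace of the compatibility relation $K+K^*=-\sum_p\lambda_pG_p^*G_p$ and using $tr(G_p^*G_p)=1$ together with the fact that $tr(K)=\frac{1}{2}\lambda_{N^2}$ is real yields $\lambda_{N^2}=-\sum_{p=1}^{N^2-1}\lambda_p$. Consequently, $tr(H)=i\,tr(K)+i\,tr(M)=\frac{i}{2}\bigl(\lambda_{N^2}+\sum_{p=1}^{N^2-1}\lambda_p\bigr)=0$, and the remaining trace and orthonormality properties of the $G_p$ transfer unchanged from Proposition \ref{proposition2.5}. I do not anticipate a genuine obstacle; the substantive step—recognizing that trace preservation encodes precisely the symmetry needed to make $H$ both self-adjoint and traceless—drives the whole argument, while the remainder is algebraic bookkeeping with the commutator identity above.
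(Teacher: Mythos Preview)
Your proof is correct and follows essentially the same approach as the paper: apply Proposition~\ref{proposition2.5}, use the trace condition to obtain $K+K^*=-\sum_p\lambda_pG_p^*G_p$, and then split $K$ into its self-adjoint and skew-adjoint parts to produce the Hamiltonian and the dissipative commutator terms. Your $H=i(K+M)$ is exactly the paper's $H=-\Im(K)$, and your verification that $tr(H)=0$ via the compatibility relation is a minor variant of the paper's direct observation that $tr(K)=tr(K^*)=\tfrac{1}{2}\lambda_{N^2}$ forces $tr(\Im(K))=0$.
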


\begin{proof}
Since $L \in M_N(M_N(\C))$ such that $L(A^*)=L(A)^*$ for all $A \in M_N(\C)$ we have, by Proposition \ref{proposition2.5}, that there exists $K \in M_N(\C)$, $(G_p)_{p=1}^{N^2-1} \subseteq M_N(\C)$, and $(\lambda_p)_{p=1}^{N^2} \subseteq \R$ such that
\begin{equation}\label{equation2.6}
L(A)=AK^*+KA+\sum_{p=1}^{N^2-1}\lambda_pG_pAG_p^*, \quad \text{ for all }A \in M_N(\C)
\end{equation}
where $tr(K)=\frac{1}{2}\lambda_{N^2}$, $tr(G_p)=0$ for all $p=1, \dots ,N^2-1$, and $tr(G_p^*G_q)=\delta_{p,q}$ for all $p,q=1, \dots ,N^2-1$.
Since $tr(L(A))=0$ for all $A \in M_N(\C)$ we have that
\begin{align*}
0 & = \sum_{p=1}^{N^2-1}\lambda_p tr(G_pAG_p^*)+tr(AK^*)+tr(KA)
\\ & = tr \left( \left( \sum_{p=1}^{N^2-1}\lambda_pG_p^*G_p+K^*+K \right)A \right) .
\end{align*}
This is true for all $A \in M_N(\C)$ and so
$$0= \sum_{p=1}^{N^2-1} \lambda_pG_p^*G_p+K^*+K ,$$
that is,
$$\Re{(K)}=-\frac{1}{2}\sum_{p=1}^{N^2-1}\lambda_pG_p^*G_p$$
where $\Re{(K)}$ denotes the real part of $K$.
From Equation \eqref{equation2.6} and by substituting in the above equation, we obtain
\begin{align*}
L(A) & = \sum_{p=1}^{N^2-1}\lambda_pG_pAG_p^*+A(\Re{(K)}+i\Im{(K)})^*+(\Re{(K)}+i\Im{(K)})A
\\ & = \sum_{p=1}^{N^2-1}\lambda_pG_pAG_p^*-\frac{1}{2}\sum_{p=1}^{N^2-1}\lambda_pAG_p^*G_p-\frac{1}{2}\sum_{p=1}^{N^2-1}\lambda_pG_p^*G_pA-iA\Im{(K)}+i\Im{(K)}A
\\ & = i[\Im{(K)},A]+\frac{1}{2}\sum_{p=1}^{N^2-1}\lambda_p([G_p,AG_p^*]+[G_pA,G_p^*])
\end{align*}
where $\Im{(K)}$ is the imaginary part of $K$.  Set $H=-\Im{(K)}$.  Clearly $H$ is self-adjoint since the imaginary part of any operator is self-adjoint.  We know $tr(K)=\frac{1}{2}\lambda_{N^2}$ and, if we recall the equation for $K$ from Proposition \ref{proposition2.5}, we see that $tr(K^*)=\frac{1}{2}\lambda_{N^2}$ as well.  Hence, $tr(H)=tr(-\Im{(K)})=0$.  This completes the proof.
\end{proof}

The form given in the last proposition characterizes the generators of all uniformly continuous, $^*$-preserving, and trace preserving semigroups as the next corollary shows.

\begin{cor}
Let $(T_t)_{t \geq 0}$ be a uniformly continuous semigroup on $M_N(\C)$ with generator $L$.  Then $(T_t)_{t \geq 0}$ is $^*$-preserving and trace preserving if and only if there exist $H \in M_N(\C)$, $(G_p)_{p=1}^{N^2-1} \subseteq M_N(\C)$, and $(\lambda_p)_{p=1}^{N^2-1} \subseteq \R$ such that
$$L(A)=-i[H,A]+\frac{1}{2} \sum_{p=1}^{N^2-1}\lambda_p([G_p,AG_p^*]+[G_pA,G_p^*]), \quad \text{ for all } A \in M_N(\C)$$
where $H$ is self-adjoint, $tr(H)=0$, $tr(G_p)=0$ for all $p=1, \dots, N^2-1$, and $tr(G_p^*G_q)=\delta_{p,q}$ for all $p,q=1, \dots , N^2-1$.
\end{cor}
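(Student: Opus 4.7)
The plan is to reduce the corollary to Proposition \ref{proposition2.6} in one direction and to a short power-series argument (like the one in Equation \eqref{equationcor}) in the other. Both directions are short because the hard structural work has already been done.

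For the forward direction, I first observe that the hypothesis "$(T_t)_{t\geq 0}$ is $^*$-preserving and trace preserving" gives us, by passing to the generator, both properties needed to invoke Proposition \ref{proposition2.6}. Concretely: since $T_t(A^*) = T_t(A)^*$ for all $t \geq 0$, the same computation as in Proposition \ref{proposition2.2}(i) yields
\[ L(A^*) = \lim_{t \to 0} \frac{T_t(A^*) - A^*}{t} = \left(\lim_{t \to 0} \frac{T_t(A) - A}{t}\right)^{\!*} = L(A)^*, \]
and since $\mathrm{tr}(T_t(A)) = \mathrm{tr}(A)$ for all $t \geq 0$, the computation in Proposition \ref{proposition2.2}(ii) (neither of which used complete positivity) gives $\mathrm{tr}(L(A))=0$. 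These are exactly the hypotheses of Proposition \ref{proposition2.6}, so we obtain $H$, $(G_p)$, and $(\lambda_p)$ with the claimed properties.

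For the backward direction, suppose $L$ has the stated GKSL form. I want to check that $L(A^*)=L(A)^*$ and $\mathrm{tr}(L(A))=0$, and then transfer these to $T_t = e^{tL}$. The Hamiltonian piece is easy: $(-i[H,A])^* = i(A^*H - HA^*) = -i[H,A^*]$ since $H=H^*$, and $\mathrm{tr}([H,A])=0$ since the trace of any commutator vanishes. For the dissipative piece, the key observation is that under the involution $A \mapsto A^*$ combined with starring the output, the two bracket terms swap:
\[ \bigl([G_p,AG_p^*]\bigr)^{\!*} = [G_pA^*,G_p^*], \qquad \bigl([G_pA,G_p^*]\bigr)^{\!*} = [G_p,A^*G_p^*], \]
so the symmetric sum $[G_p,AG_p^*]+[G_pA,G_p^*]$ is carried to the same expression evaluated at $A^*$, and $\mathrm{tr}$ of both brackets is zero because they are commutators. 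Combined with $\lambda_p \in \R$, this yields $L(A^*)=L(A)^*$ and $\mathrm{tr}(L(A))=0$.

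To transfer these properties to the semigroup, I would repeat verbatim the power-series manipulation in Equation \eqref{equationcor}, using uniform continuity so that $T_t = e^{tL} = \sum_{k \geq 0} t^k L^k / k!$. Applied to $A^*$, term-by-term starring gives $T_t(A^*) = T_t(A)^*$. Applied to $A$, taking the trace term-by-term and using $\mathrm{tr}(L^k(A)) = \mathrm{tr}(L(L^{k-1}(A))) = 0$ for $k \geq 1$ gives $\mathrm{tr}(T_t(A)) = \mathrm{tr}(A)$. There is no real obstacle here; the only thing one must be careful about is the bookkeeping in the commutator identities above, since swapping the roles of $A$ and $A^*$ across the two summands is what makes the symmetric combination survive starring. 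Everything else is routine.
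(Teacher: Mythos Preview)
Your proof is correct and follows essentially the same approach as the paper: reduce the forward direction to Proposition~\ref{proposition2.6} via the generator computations of Proposition~\ref{proposition2.2}, and for the backward direction verify $L(A^*)=L(A)^*$ and $\mathrm{tr}(L(A))=0$ from the commutator structure and transfer to $T_t$ via the exponential series. The only minor deviation is that the paper obtains trace-preservation of $T_t$ by the derivative argument $\frac{d}{dt}\mathrm{tr}(T_t(A))=\mathrm{tr}(L(T_t(A)))=0$ rather than your term-by-term trace of the power series, but the two arguments are equivalent.
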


\begin{proof}
For the forward direction, the fact that $(T_t)_{t \geq 0}$ preserves stars and traces implies $L(A^*)=L(A)^*$ and $tr(L(A))=0$ for all $A \in M_N(\C)$ and so we can simply apply Proposition \ref{proposition2.6}.
For the backwards direction, suppose $L$ is of the form given in the statement of the corollary.  Then, since $L$ is a sum of commutators, it is easy to see that $tr(L(A))=0$ for all $A \in M_N(\C)$.  Then, for $A \in M_N(\C)$,
$$0=tr(L(T_t(A)))=\lim_{h \rightarrow 0}\frac{1}{h} \left( tr(T_{h+t}(A))-tr(T_t(A)) \right) = \frac{d}{dt}tr(T_t(A))$$
and so the map $t \mapsto tr(T_t(A))$ is constant.  But, $tr(T_0(A))=tr(A)$, hence $tr(T_t(A))=tr(A)$ for all $t \geq 0$.  Further, from the form of $L$ it is straightforward to check that $L(A^*)=L(A)^*$ for all $A \in M_N(\C)$ and so Equation \eqref{equationcor} gives that $T_t(A^*)=T_t(A)^*$ for all $A \in M_N(\C)$.
\end{proof}

We now have a form for any matrix $L\in M_N(M_N(\C))$ such that $L(A^*)=L(A)^*$ and $tr(L(A))=0$ for all $A \in M_N(\C)$.  While generators of QDS satisfy both of these requirements, the result above still does not characterize generators of QDS.  The remaining property needed is for $\lambda_p \geq 0$ for all $p=1, \dots , N^2-1$.

\begin{prop}\label{proposition2.8}
Let $L$ be the generator of a QDS on $M_N(\C)$.  Then there exists $H \in M_N(\C)$, $(G_p)_{p=1}^{N^2-1} \subseteq M_N(\C)$, and positive scalars $\lambda_1, \dots , \lambda_{N^2-1}$ such that 
\begin{equation}
L(A)=-i[H,A]+\frac{1}{2} \sum_{p=1}^{N^2-1}\lambda_p([G_p,AG_p^*]+[G_pA,G_p^*]), \quad \text{ for all } A \in M_N(\C)
\end{equation}
where $H$ is self-adjoint, $tr(H)=0$, $tr(G_p)=0$ for all $p=1, \dots, N^2-1$, and $tr(G_p^*G_q)=\delta_{p,q}$ for all $p,q=1, \dots , N^2-1$.
\end{prop}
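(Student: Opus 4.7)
The plan is to start from the decomposition supplied by Proposition \ref{proposition2.6} and use complete positivity of $T_t$ to upgrade the real scalars $\lambda_p$ to nonnegative ones. Since Proposition \ref{proposition2.2} shows $L$ is $^*$-preserving and trace-annihilating, Proposition \ref{proposition2.6} already produces the required $H$, $(G_p)_{p=1}^{N^2-1}$, and real scalars $(\lambda_p)_{p=1}^{N^2-1}$ with the stated trace conditions, so the only remaining task is to show $\lambda_p \geq 0$.

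The first step is to recognize what this positivity amounts to at the level of $L$. Tracing through the proof of Proposition \ref{proposition2.5}, the $\lambda_p$'s are precisely the eigenvalues of the Hermitian principal submatrix $C := [c_{\alpha,\beta}]_{\alpha,\beta=1}^{N^2-1}$ coming from the expansion $L(A)=\sum_{\alpha,\beta=1}^{N^2} c_{\alpha,\beta} F_\alpha A F_\beta^*$ relative to an orthonormal basis $(F_\alpha)_{\alpha=1}^{N^2}$ of $M_N(\C)$ with $F_{N^2}=\tfrac{1}{\sqrt N}1$. So I would reduce the problem to showing that $C$ is positive semidefinite.

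Next, I would bring complete positivity to bear through a single distinguished test element: the matrix $P := \sum_{i,j=1}^N E_{ij} \otimes E_{ij} \in M_N(\C)\otimes M_N(\C)$, which equals $\Omega\Omega^*$ for $\Omega := \sum_i e_i \otimes e_i \in \C^N\otimes\C^N$ and is therefore positive semidefinite. Since $T_t$ is completely positive, $T_t^{(N)}(P)\geq 0$ for every $t\geq 0$, while $\la\psi,P\psi\ra = |\la\psi,\Omega\ra|^2 = 0$ whenever $\psi\perp\Omega$. Dividing by $t$ and letting $t\to 0^+$ then yields
$$\la\psi,L^{(N)}(P)\psi\ra \;=\; \lim_{t\to 0^+} \frac{\la\psi,T_t^{(N)}(P)\psi\ra}{t} \;\geq\; 0 \quad \text{for every } \psi\perp\Omega.$$

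Finally, I would compute $L^{(N)}(P)$ in a basis adapted to $C$. Using $L(E_{ij})=\sum_{\alpha,\beta}c_{\alpha,\beta}F_\alpha E_{ij}F_\beta^*$ and pulling the $\alpha,\beta$-sums outside, one obtains
$$L^{(N)}(P) \;=\; \sum_{i,j=1}^N L(E_{ij})\otimes E_{ij} \;=\; \sum_{\alpha,\beta=1}^{N^2} c_{\alpha,\beta}\, \eta_\alpha\eta_\beta^*, \qquad \eta_\alpha := (F_\alpha\otimes 1)\Omega.$$
A direct calculation shows $\la\eta_\alpha,\eta_\beta\ra = \mathrm{tr}(F_\beta^*F_\alpha) = \delta_{\alpha,\beta}$, so $(\eta_\alpha)_{\alpha=1}^{N^2}$ is an orthonormal basis of $\C^N\otimes\C^N$ and $\eta_{N^2}=\Omega/\sqrt N$, i.e., $\Omega^\perp = \mathrm{span}\{\eta_1,\dots,\eta_{N^2-1}\}$. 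Substituting $\psi = \sum_{\alpha=1}^{N^2-1} v_\alpha\eta_\alpha$ into the previous inequality gives $\sum_{\alpha,\beta=1}^{N^2-1}\overline{v_\alpha}v_\beta\, c_{\alpha,\beta}\geq 0$ for all $v\in\C^{N^2-1}$, which is exactly $C\geq 0$, and hence $\lambda_p\geq 0$ for all $p$. The main obstacle I expect is precisely this identification step: cleanly choosing $P$---essentially the unnormalized Choi state---so that complete positivity of $T_t$ transfers, via the derivative at $t=0^+$, to the principal submatrix whose eigenvalues are exactly the $\lambda_p$; everything else is bookkeeping.
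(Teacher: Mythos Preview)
Your argument is correct. Both your approach and the paper's hinge on the same test element---the unnormalized Choi state $P=\sum_{i,j}E_{ij}\otimes E_{ij}=\Omega\Omega^*$---and the same mechanism (the derivative at $t=0^+$ of $T_t^{(N)}(P)$ in directions orthogonal to $\Omega$), but the presentations differ. The paper first proves that $L^{(N)}$ itself generates a QDS on $M_N(\C)\otimes M_N(\C)$, then establishes the trace inequality $tr(P_rL(P_s))\geq 0$ for orthogonal projections, and applies it to the rank-one projections $R_q=(G_q\otimes 1)P(G_q\otimes 1)^*$ built from the already-diagonalized basis $(G_q)$; since $R_{N^2}=\tfrac{1}{N}P$, the computation $N\,tr(R_qL^{(N)}(R_{N^2}))=\lambda_q$ isolates each eigenvalue directly. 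You instead stay in the original basis $(F_\alpha)$, compute $L^{(N)}(P)=\sum_{\alpha,\beta}c_{\alpha,\beta}\eta_\alpha\eta_\beta^*$, and use the quadratic form on $\Omega^\perp$ to obtain $C=[c_{\alpha,\beta}]_{\alpha,\beta=1}^{N^2-1}\geq 0$ in one stroke. Your route is a bit leaner: it needs only $N$-positivity of $T_t$ (not the full QDS structure of $(T_t^{(N)})$), and it avoids the separate projection/trace lemma by working directly with vectors. The paper's route, on the other hand, is closer to the original GKS computation and makes the individual nonnegativity $\lambda_q\geq 0$ explicit without passing through the intermediate statement ``$C\geq 0$''. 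At bottom the two are the same argument: your $\eta_\alpha$ and the paper's $R_q$ are related by $R_q=\eta'_q(\eta'_q)^*$ with $\eta'_q=(G_q\otimes 1)\Omega$, so the paper's trace inequality is exactly your quadratic-form inequality in the diagonalizing basis.
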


\begin{proof}
Since $L$ is the generator of a quantum dynamical semigroup, by Proposition \ref{proposition2.2}, $L(A^*)=L(A)^*$ and $tr(L(A))=0$ for all $A \in M_N(\C)$ and so, by Proposition \ref{proposition2.6}, there exist $H \in M_N(\C)$, $(G_p)_{p=1}^{N^2-1} \subseteq M_N(\C)$, and $(\lambda_p)_{p=1}^{N^2-1} \subseteq \R$ such that
$$L(A)=-i[H,A]+\frac{1}{2} \sum_{p=1}^{N^2-1}\lambda_p([G_p,AG_p^*]+[G_pA,G_p^*]), \quad \text{ for all } A \in M_N(\C)$$
where $H$ is self-adjoint, $tr(H)=0$, $tr(G_p)=0$ for all $p=1, \dots, N^2-1$, and $tr(G_p^*G_q)=\delta_{p,q}$ for all $p,q=1, \dots , N^2-1$.  Since $tr(G_p^*G_q)=\delta_{p,q}$ for all $p,q=1, \dots ,N^2-1$ and $tr(G_p\cdot 1)=tr(G_p)=0$ for all $p=1,\dots, N^2-1$, we know $(G_p)_{p=1}^{N^2}$ is an orthonormal basis, where we let $G_{N^2}=\frac{1}{\sqrt{N}}1$.  We proceed with two claims.

{\bf Claim 1:}  Let $\{P_1, \dots ,P_N \}$ be a set of mutually orthogonal self-adjoint projections in $M_N(\C)$.  If $L$ is the generator of a QDS then $tr(P_rL(P_s)) \geq 0$ for all $r,s=1, \dots ,N$ where $r \neq s$.

Let $r,s=1, \dots , N$ where $r \neq s$.  Then 
\begin{equation}\label{equation278}
tr(P_rL(P_s))=\lim_{t \rightarrow 0^+} tr\left( P_r\frac{T_t(P_s)-P_s}{t}\right) =\lim_{t \rightarrow 0^+}\frac{1}{t}tr(P_rT_t(P_s)) .
\end{equation}
Let $e_1, \dots ,e_N$ be an orthonormal basis for $\C^N$ such that $span(e_k)_{k=1}^m=Rang(P_r)$ and $span(e_k)_{k=m+1}^N=Rang^{\perp}(P_r)$ for some $1 \leq m \leq N$.  Then, from Equation \eqref{equation278}, we have
\begin{align*}
tr(P_rL(P_s)) & =\lim_{t \rightarrow 0^+}\frac{1}{t}tr(P_rT_t(P_s))
\\ & = \lim_{t \rightarrow 0^+} \frac{1}{t} \sum_{k=1}^N \la P_re_k,T_t(P_s)e_k \ra
\\ & = \lim_{t \rightarrow 0^+} \frac{1}{t} \sum_{k=1}^m \la e_k,T_t(P_s)e_k \ra
\\ & \geq 0
\end{align*}
since $T_t$ is positive, hence $T_t(P_s)$ is positive, and so $\la e_k,T_t(P_s)e_k \ra \geq 0$ for $k=1, \dots, m$ and for all $t \geq 0$.  This completes the proof of the first claim.  

{\bf Claim 2:}  The operator $L^{(N)}$ generates a quantum dynamical semigroup on $M_N(\C) \otimes M_N(\C)$.  

Let $T_{t,N}=e^{tL^{(N)}}$.  We first want to show that $T_{t,N}=T_t^{(N)}$.  To this end, let $\sum_{i,j=1}^nA_{ij}\otimes E_{ij} \in M_N(\C) \otimes M_N(\C)$ where $A_{ij}\in M_N(\C)$ for all $i,j=1, \dots , N$. Then,
\begin{align*}
T_{t,N} \left(\sum_{i,j=1}^nA_{ij}\otimes E_{ij} \right) & = \sum_{k=0}^{\infty}\frac{t^k}{k!}(L^{(N)})^k\left( \sum_{i,j=1}^nA_{ij}\otimes E_{ij} \right)
\\ & = \sum_{k=0}^{\infty}\frac{t^k}{k!} \sum_{i,j=1}^NL^k(A_{ij}) \otimes E_{ij}
\\ & = \sum_{i,j =1}^N \left( \sum_{k=0}^{\infty}\frac{t^k}{k!}L^k(A_{ij}) \right) \otimes E_{ij}
\\ & = \sum_{i,j=1}^NT_t(A_{ij}) \otimes E_{ij}
\\ & = (T_t^{(N)}) \left(\sum_{i,j=1}^nA_{ij}\otimes E_{ij} \right)
\end{align*}
and so $T_{t,N}=T_t^{(N)}$ for all $t \geq 0$.  Then, since $T_t$ is completely positive if and only if $T_t^{(k)}$ is positive for all positive integers $k$, we have that $T_t^{(N)}$ is completely positive.  Further, 
\begin{align*}
tr\left(T_t^{(N)} \left( \sum_{i,j=1}^N A_{ij} \otimes E_{ij} \right) \right) & = tr\left( \sum_{i,j=1}^N T_t(A_{ij}) \otimes E_{ij} \right)
\\ & = \sum_{i=1}^N tr(T_t(A_{ii}))
\\ & = \sum_{i=1}^N tr(A_{ii})
\\ & = tr \left(  \sum_{i,j=1}^N A_{ij} \otimes E_{ij} \right)
\end{align*}
and so $T_t^{(N)}$ preserves traces for all $t \geq 0$.  Therefore, $L^{(N)}$ generates a QDS on $M_N(\C) \otimes M_N(\C)$.  This completes the proof of the second claim.

Now, we would like to apply the result of Claim 1 to the generator $L^{(N)}$ with a carefully chosen family of mutually orthogonal self-adjoint projections in $M_N(\C) \otimes M_N(\C)$ to show that $\lambda_p \geq 0$ for all $p=1, \dots , N^2-1$.  Let
$$R_q=\sum_{i,j=1}^N G_qE_{i,j}G_q^* \otimes E_{i,j} \quad \text{ for } q=1, \dots ,N^2.$$
We first want to show that $\{R_1, \dots , R_{N^2} \}$ is a mutually orthogonal family of self-adjoint projections in $M_N(\C) \otimes M_N(\C)$.  First,
\begin{align*}
R_q^* & = \sum_{i,j=1}^N(G_qE_{i,j}G_q^*)^* \otimes E_{i,j}^*
\\ & = \sum_{i,j=1}^NG_qE_{j,i}G_q^* \otimes E_{j,i}
\\ & =\sum_{i,j=1}^N G_qE_{i,j}G_q^* \otimes E_{i,j} && \text{by switching indices}
\\ & =R_q
\end{align*}
and so $R_q$ is self-adjoint for all $q=1, \dots N^2$.  Also,
\begin{align*}
\sum_{q=1}^{N^2} \sum_{i,j=1}^N G_qE_{i,j}G_q^* \otimes E_{i,j} & = \sum_{i,j=1}^N \left( \sum_{q=1}^{N^2} G_qE_{i,j}G_q^* \right) \otimes E_{i,j}
\\ & =\sum_{i,j=1}^Ntr(E_{i,j})1_N \otimes E_{i,j} && \text{by Lemma \ref{lemma2.3} }
\\ & = \sum_{i=1}^N1_N \otimes E_{i,i}
\\ & =1_{N^2}.
\end{align*}
Further, for $q,s=1, \dots ,N^2$,
\begin{align*}
R_qR_s & =\sum_{i,j=1}^N \left( \sum_{k=1}^N G_qE_{i,k}G_q^*G_sE_{k,j}G_s^* \right) \otimes E_{i,j}
\\ & = \sum_{i,j=1}^N tr(G_q^*G_s)G_qE_{i,j}G_s^* \otimes E_{i,j}
\\ & = \delta_{qs}R_q.
\end{align*}
Hence, $\{R_1, \dots ,R_{N^2} \}$ is a mutually orthogonal family of self-adjoint projections in $M_N(\C) \otimes M_N(\C)$.  Then, by applying Claim 1, we have that for all $q=1, \dots , N^2-1$,
\begin{align*}
0 & \leq Ntr(R_q(L^{(N)})R_{N^2})
\\ & =Ntr \left[ R_q \left( \frac{1}{N} \sum_{i,j=1}^N L(E_{ij}) \otimes E_{ij} \right) \right]
\\ & =tr \left[ \sum_{i,j=1}^N \sum_{k=1}^N G_qE_{ik}G_q^*L(E_{kj}) \otimes E_{ij} \right]
\\ & = tr \left[ \sum_{i=1}^N \sum_{k=1}^N G_qE_{ik}G_q^*L(E_{ki}) \otimes E_{ii} \right]
\\ & = \sum_{i=1}^N \sum_{k=1}^N tr(G_qE_{ik}G_q^*L(E_{ki}))
\\ & = \sum_{i=1}^N \sum_{k=1}^N tr \left[G_qE_{ik}G_q^* \left( E_{ki}K^*+KE_{ik} + \sum_{p=1}^{N^2-1} \lambda_pG_pE_{ki}G_p^* \right) \right]
\\ & =tr(G_q^*)tr(G_qK^*)+tr(G_q)tr(G_q^*K)+\sum_{p=1}^{N^2-1} \lambda_p tr(G_q^*G_p)tr(G_qG_p^*)
\\ & = \lambda_q
\end{align*}
and therefore $\lambda_p \geq 0$ for all $q=1, \dots ,N^2-1$.  This completes the proof.
\end{proof}
The resulting properties of the generator given in Proposition \ref{proposition2.8} turn out to be necessary and sufficient conditions for a matrix $L$ to be the generator of a QDS.  The details are given in the following theorem.

\begin{thm}
A linear operator $L: M_N(\C) \rightarrow M_N(\C)$ is the generator of a quantum dynamical semigroup if and only if there exist $H \in M_N(\C)$, $(G_p)_{p=1}^{N^2-1} \subseteq M_N(\C)$, and positive scalars $\lambda_1, \dots ,\lambda_{N^2-1}$ such that
\begin{equation}\label{maineq}
L(A)=-i[H,A]+\frac{1}{2}\sum_{p=1}^{N^2-1} \lambda_p ([G_p,AG_p^*]+[G_pA,G_p^*]), \quad \text{ for all } A \in M_N(\C)
\end{equation}
where $H$ is self-adjoint, $tr(H)=0$, $tr(G_p)=0$ for all $p=1, \dots , N^2-1$, and $tr(G_p^*G_q)=\delta_{pq}$ for all $p,q=1, \dots ,N^2-1$.
\end{thm}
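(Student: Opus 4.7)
The forward direction is exactly the content of Proposition \ref{proposition2.8}, so the plan is to prove the converse: given $L$ of the stated form, the semigroup $T_t := e^{tL}$ is a QDS. Weak (in fact, norm) continuity of $t \mapsto T_t$ is automatic because $L$ is bounded. Trace preservation follows from the observation that $L$ is a sum of commutators, so $tr(L(A))=0$ for every $A$; differentiating $tr(T_t(A))$ as in the corollary to Proposition \ref{proposition2.6} then yields $tr(T_t(A))=tr(A)$ for all $t \geq 0$. The only nontrivial requirement is complete positivity.

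My strategy is to split $L$ into two pieces, each generating a completely positive semigroup, and then reassemble them via the Lie--Trotter product formula. Expanding the commutators in the stated form of $L$ (essentially reversing the manipulation from Proposition \ref{proposition2.6}), one obtains
\begin{equation*}
L(A) = KA + AK^* + \Phi(A), \qquad \Phi(A) := \sum_{p=1}^{N^2-1} \lambda_p\, G_p A G_p^*,
\end{equation*}
where $K := -iH - \tfrac{1}{2}\sum_{p=1}^{N^2-1} \lambda_p G_p^* G_p$. Setting $L_0(A) := KA + AK^*$, a direct series computation gives $e^{tL_0}(A) = e^{tK} A e^{tK^*}$, which is a single-Kraus map and hence completely positive for all $t \geq 0$. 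Because $\lambda_p \geq 0$ for each $p$, the map $\Phi$ is itself completely positive (a nonnegative combination of Kraus terms), so every power $\Phi^k$ is completely positive, and therefore $e^{t\Phi} = \sum_{k \geq 0}\tfrac{t^k}{k!}\Phi^k$ is completely positive as a convergent sum of completely positive maps with nonnegative coefficients.

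To conclude I invoke the Lie--Trotter product formula for bounded operators,
\begin{equation*}
T_t(A) = e^{t(L_0 + \Phi)}(A) = \lim_{n \to \infty} \bigl( e^{(t/n)L_0}\, e^{(t/n)\Phi} \bigr)^n(A).
\end{equation*}
Each approximant is a composition of completely positive maps and is therefore completely positive, and complete positivity is preserved under limits on the finite-dimensional space $M_N(\C)$, since for each $k$ the ampliation $T \mapsto T^{(k)}$ is continuous and the cone of positive operators is closed. Hence $T_t$ is completely positive, finishing the converse. The main obstacle I anticipate is not any single deep step but keeping the argument self-contained: the Lie--Trotter formula for bounded operators admits a short elementary proof, and if one wishes to avoid invoking it altogether, the Dyson (Duhamel) expansion $T_t = \sum_{n} \int_{0 \leq s_1 \leq \cdots \leq s_n \leq t} e^{(t-s_n)L_0}\,\Phi\cdots \Phi\, e^{s_1 L_0} \, ds_1 \cdots ds_n$ gives a termwise manifestly completely positive series with the same conclusion.
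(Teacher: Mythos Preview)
Your argument is correct, but it proceeds along a genuinely different route than the paper's. Both proofs agree on the forward direction (Proposition \ref{proposition2.8}) and on trace preservation via $tr(L(A))=0$; the divergence is in establishing complete positivity of $T_t$.

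You decompose $L=L_0+\Phi$ with $L_0(A)=KA+AK^*$ and $\Phi(A)=\sum_p \lambda_p G_p A G_p^*$, note that each piece separately exponentiates to a completely positive semigroup ($e^{tL_0}(A)=e^{tK}Ae^{tK^*}$ is a single Kraus term; $e^{t\Phi}$ is a convergent nonnegative series of CP maps), and then reassemble via the Lie--Trotter formula or, equivalently, the Dyson expansion. This exploits the explicit Kraus structure of the dissipator in the most transparent way, and it makes the role of the hypothesis $\lambda_p\geq 0$ immediately visible: it is exactly what makes $\Phi$ completely positive.

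The paper instead verifies the conditional complete positivity inequality
\[
L^{(k)}(x^*x)-x^*L^{(k)}(x)-L^{(k)}(x^*)x+x^*L^{(k)}(1)x\geq 0
\]
for all $k$, deduces from it that $b^*L^{(k)}(a^*a)b\geq 0$ whenever $ab=0$, and then shows directly that the resolvent $(1-\lambda^{-1}L^{(k)})^{-1}$ is positive for $\lambda>\|L^{(k)}\|$. Complete positivity of $T_t$ then follows from the Euler-type limit $T_t^{(k)}=\lim_n(1-(t/n)L^{(k)})^{-n}$. This approach does not split $L$ at all and does not rely on Trotter or Dyson; instead it is a resolvent-positivity argument in the spirit of Hille--Yosida. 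What it buys is a proof pattern that is intrinsic to the generator (via the dissipation inequality) and that, at least in outline, survives the passage to unbounded generators and infinite-dimensional Hilbert spaces, which your Trotter/Dyson argument in its present form does not without additional work. Conversely, your route is shorter and more conceptually direct in the $N$-level setting, at the cost of importing the Lie--Trotter formula as an external ingredient.
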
 	

\begin{proof}
The "only if" part of the statement is given by Proposition \ref{proposition2.8}.  For the "if" part of the statement, let $L$ be a linear operator of the form given in Equation \eqref{maineq}.  Define $T_t:M_N(\C) \rightarrow M_N(\C)$ by $T_t=e^{tL}$ for all $t \geq 0$.  Then $L$ is the generator of the semigroup $(T_t)_{t \geq 0}$.  Once we show that that for all $t \geq 0$, the operator $T_t$ is completely positive and preserves traces then the proof will be complete.  To this end, note that $tr(L(A))=0$ for all $A \in M_N(\C)$ since $L(A)$ is a sum of commutators.  Then, for $A \in M_N(\C)$,
$$0=tr(L(T_t(A)))=\lim_{h \rightarrow 0}\frac{1}{h} \left( tr(T_{h+t}(A))-tr(T_t(A)) \right) = \frac{d}{dt}tr(T_t(A))$$
and so the map $t \mapsto tr(T_t(A))$ is constant.  But, $tr(T_0(A))=tr(A)$, hence $tr(T_t(A))=tr(A)$ for all $t \geq 0$.

Now, we want to show that $T_t$ is completely positive for all $t \geq 0$.  In order to do so, we will show that $T_t^{(k)}$ is positive for all $k \in \N$.  Let $k \in \N$ and define $L^{(k)}:M_N(\C) \otimes M_k(\C) \rightarrow M_N(\C) \otimes M_k(\C)$ by $L^{(k)}(A \otimes E_{ij})=L(A) \otimes E_{ij}$.  Further, define $H^{[k]}, G_p^{[k]} \in M_N(\C) \otimes M_k(\C)$ by $H^{[k]}=\sum_{i=1}^kH \otimes E_{ii}$ and $G_p^{[k]}=\sum_{i=1}^kG_p \otimes E_{ii}$.  Then, by Equation \eqref{maineq}, we have that
$$L^{(k)}(x)=-i[H^{[k]},x]+\frac{1}{2}\sum_{p=1}^{N^2-1} \lambda_p ([G_p^{[k]},x(G_p^{[k]})^*]+[G_p^{[k]}x,(G_p^{[k]})^*]), \quad \text{ for all } x \in M_N(\C) \otimes M_k(\C).$$

\noindent{\bf Claim 1:} For all $x \in M_N(\C) \otimes M_k(\C)$,
$$L^{(k)}(x^*x)-x^*L^{(k)}(x)-L^{(k)}(x^*)x+x^*L^{(k)}(1)x \geq 0.$$

For convenience, let $\phi^{[k]}=\frac{1}{2}\sum_{p=1}^{N^2-1} \lambda_p ([G_p^{[k]},x(G_p^{[k]})^*]+[G_p^{[k]}x,(G_p^{[k]})^*])$ so that $L^{(k)}(x)=-i[H^{[k]},x]+\phi^{[k]}(x)$.  Then, with a bit of algebra, it is straightforward to see that
\begin{align*}
& L^{(k)}(x^*x)-x^*L^{(k)}(x)-L^{(k)}(x^*)x+x^*L^{(k)}(1)x 
\\ & = \phi^{[k]}(x^*x)-x^*\phi^{[k]}(x)-\phi^{[k]}(x^*)x+x^*\phi^{[k]}(1)x
\\ & = \sum_{p=1}^{N^2-1} \lambda_p \left(G_p^{[k]}x^*x(G_p^{[k]})^*-x^*G_p^{[k]}x(G_p^{[k]})^*-G_p^{[k]}x^*(G_p^{[k]})^*x+x^*G_p^{[k]}(G_p^{[k]})^*x \right)
\\ & =\sum_{p=1}^{N^2-1}\lambda_p \left(x(G_p^{[k]})^*-(G_p^{[k]})^*x \right)^*\left(x(G_p^{[k]})^*-(G_p^{[k]})^*x \right)
\\ & \geq 0
\end{align*}
which proves the claim.  We now proceed with another claim.

\noindent{\bf Claim 2:}  For all $a,b \in M_N(\C) \otimes M_k (\C)$ such that $ab=0$, we have that $b^*L^{(k)}(a^*a)b \geq 0$.

By Claim 1, we have that
$$L^{(k)}(a^*a) \geq a^*L^{(k)}(a)+L^{(k)}(a^*)a-a^*L^{(k)}(1)a$$
and so
$$b^*L^{(k)}(a^*a)b \geq b^*a^*L^{(k)}(a)b+b^*L^{(k)}(a^*)ab-b^*a^*L^{(k)}(1)ab=0$$
since $ab=0$.  This proves Claim 2.  We now have one final claim.

\noindent{\bf Claim 3:}  For all $\lambda \in \R$ such that $\lambda > \|L^{(k)} \|$, we have that $(1-\lambda^{-1}L^{(k)})^{-1} \geq 0$.

To this end, we want to show that for all $a \in M_N(\C) \otimes M_k(\C)$ such that $(1-\lambda^{-1}L^{(k)})(a) \geq 0$ we have that $a \geq 0$.  Our first goal is to establish that it is enough to show that for all self-adjoint $a \in M_N(\C) \otimes M_k(\C)$ such that $(1-\lambda^{-1}L^{(k)})(a) \geq 0$ we have that $a \geq 0$.  So, let $a \in M_N(\C) \otimes M_k(\C)$ and suppose  $(1-\lambda^{-1}L^{(k)})(a) \geq 0$.  Then,
\begin{align*}
0 & \leq (1-\lambda^{-1}L^{(k)})(a)
\\ & =(1-\lambda^{-1}L^{(k)})(\Re{(a)}+i\Im{(a)})
\\ & =(1-\lambda^{-1}L^{(k)})(\Re{(a)})+i(1-\lambda^{-1}L^{(k)})(\Im{(a)})
\end{align*}
and so, $(1-\lambda^{-1}L^{(k)})(\Im{(a)})=0$ and $(1-\lambda^{-1}L^{(k)})(\Re{(a)})\geq 0$.  Then, if the statement is true for self-adjoint operators, we have that $\Re{(a)} \geq 0$.  Also, since $(1-\lambda^{-1}L^{(k)})$ is invertible and $(1-\lambda^{-1}L^{(k)})(\Im{(a)})=0$, we have that $\Im{(a)}=0$ and so $a=\Re{(a)} \geq 0$.  So, to prove the claim, it suffices to show that for all self-adjoint $a \in M_N(\C) \otimes M_k(\C)$ such that $(1-\lambda^{-1}L^{(k)})(a) \geq 0$ we have that $a \geq 0$.  Let $a \in M_N(\C) \otimes M_k(\C)$ be self-adjoint and suppose that  $(1-\lambda^{-1}L^{(k)})(a) \geq 0$.  Let $a=x-y$ where $x$ and $y$ are the positive and negative parts of $a$, respectively.  Then $\sqrt{x}y=0$, so by Claim 2, $0 \leq y^*L^{(k)}(\sqrt{x}^*\sqrt{x})y=yL^{(k)}(x)y$.  Further, by Claim 3, we have that $0 \leq (1-\lambda^{-1}L^{(k)})(a)$, and so
\begin{align*}
0 & \leq y\left  (1-\lambda^{-1}L^{(k)})(a) \right)y
\\ & = yay-\lambda^{-1}yL^{(k)}(a)y
\\ & =y(x-y)y-\lambda^{-1}yL^{(k)}(x-y)y
\\ & =-y^3-\lambda^{-1}yL^{(k)}(x)y+\lambda^{-1}yL^{(k)}(y)y
\\ & \leq -y^3+\lambda^{-1}yL^{(k)}(y)y
\end{align*}
since $yL^{(k)}(x)y \geq 0$.  Hence, $0 \leq y^3 \leq \lambda^{-1}yL^{(k)}(y)y$ and thus
$$ \|y\|^3=\|y^3\| \leq \lambda^{-1} \|yL^{(k)}(y)y \| \leq \lambda^{-1} \|L^{(k)} \| \|y \|^3$$
and so $y=0$ since $\|L^{(k)} \| < \lambda$.  Therefore, $a=x \geq 0$.  This completes the proof of Claim 3.
Now, let $t \geq 0$.  Then, for large enough $n$ we have, by Claim 3, that $(1-\frac{t}{n}L^{(k)})^{-n} \geq 0$ and so
$$T_t^{(k)}=e^{tL^{(k)}}=\lim_{n \rightarrow \infty}\left(1-\frac{t}{n}L^{(k)}\right)^{-n}\geq 0.$$
This completes the proof.
\end{proof}

\end{spacing}

\end{document}